\newcolumntype{P}[1]{>{\centering\arraybackslash}p{#1}}
\newcolumntype{M}[1]{>{\centering\arraybackslash}m{#1}}
\newcommand{\bigzero}{\mbox{\normalfont\Large\bfseries 0}}
\newcommand{\rvline}{\hspace*{-\arraycolsep}\vline\hspace*{-\arraycolsep}}
\newcommand{\bbsout}[1]{}
\newcommand{\bsout}[1]{}
\newtheorem{theorem}{Theorem}
\newtheorem{corollary}[theorem]{Corollary}
\def \cH{{\cal H}}
\def\ANU{Centre for Quantum Computation and Communication Technology, Department of Quantum Science, Australian National University, Canberra, ACT 2601, Australia.}
\def\UEC{Graduate School of Informatics and Engineering, The
  University of Electro-Communications, Tokyo 182-8585, Japan}
  \def\Astarnew{A*STAR Quantum Innovation Centre (Q.InC), Institute of Materials Research and Engineering (IMRE), Agency for Science Technology and Research (A*STAR), 2 Fusionopolis Way, 08-03 Innovis 138634, Singapore}
\begin{document}

\newlength\figHeight 
\newlength\figWidth 

\title{Role of the extended Hilbert space in the attainability of the Quantum Cram\'{e}r--Rao bound for multiparameter estimation}

\author[1]{Lorc\'{a}n O. Conlon\footnote{lorcanconlon@gmail.com}}
\author[2]{Jun Suzuki\footnote{junsuzuki@uec.ac.jp}}
\author[1,3]{Ping Koy Lam}
\author[1,3]{Syed M. Assad\footnote{cqtsma@gmail.com}}
\affil[1]{\Astarnew}
\affil[2]{\UEC}
\affil[3]{\ANU}
\maketitle

\begin{abstract}
The symmetric logarithmic derivative Cram\'{e}r--Rao bound (SLDCRB) provides a fundamental limit to the minimum variance with which a set of unknown parameters can be estimated in an unbiased manner. It is known that the SLDCRB can be saturated provided the optimal measurements for the individual parameters commute with one another. However, when this is not the case the SLDCRB cannot be attained in general. In the experimentally relevant setting, where quantum states are measured individually, necessary and sufficient conditions for when the SLDCRB can be saturated are not known. In this setting the SLDCRB is attainable provided the SLD operators can be chosen to commute on an extended Hilbert space. However, beyond this relatively little is known about when the SLD operators can be chosen in this manner. In this paper we present explicit examples which demonstrate novel aspects of this condition. Our examples demonstrate that the SLD operators commuting on any two of the following three spaces: support space, support-kernel space and kernel space, is neither a necessary nor sufficient condition for commutativity on the extended space. We  present a simple analytic example showing that the Nagaoka--Hayashi Cramér-Rao bound is not always attainable. Finally, we provide necessary and sufficient conditions for the attainability of the SLDCRB in the case when the kernel space is one-dimensional. These results provide new information on the necessary and sufficient conditions for the attainability of the SLDCRB.
\end{abstract}

\section{Introduction}
The estimation of unknown parameters is a fundamental part of doing science. As such, there is great excitement surrounding the possibility that quantum resources may offer an advantage in this area~\cite{aasi2013enhanced,casacio2021quantum,leibfried2004toward,marciniak2021optimal,boixo2008quantum,caves1981quantum,guo2020distributed}. However, much of the previous work on this area has focused on single parameter estimation. From a practical perspective this is not ideal, as many physical problems are intrinsically multiparameter problems~\cite{monras2011measurement,humphreys2013quantum,baumgratz2016quantum,cimini2019quantum,vidrighin2014joint,chrostowski2017super,szczykulska2017reaching,hou2020minimal,conlon2023multiparameter,conlon2023verifying}. Additionally, the incompatibility of multiple observables, a fundamentally quantum feature, is only relevant when considering the simultaneous estimation of multiple parameters~\cite{heisenberg1985anschaulichen,robertson1929uncertainty,arthurs1965simultaneous}. For recent reviews on the topic see Refs.~\cite{szczykulska2016multi,liu2019quantum,sidhu2020geometric,demkowicz2020multi,conlon2023quantum}.
%
%
%

Due to the inherently probabilistic nature of quantum mechanics, when we perform any measurement on a system there will be some fundamental unavoidable uncertainty associated with our estimate of the unknown parameters. It is an important task to design the experiment, and subsequent estimation of the unknown parameters, in such a way as to minimise this uncertainty. However, this is a difficult task which quickly becomes numerically intractable for estimating even a small number of parameters on low-dimensional systems. To get around this difficulty, a series of quantum Cram{\'{e}}r--Rao bounds have been developed~\cite{helstrom1967minimum,helstrom1968minimum,yuen1973,gill2005state,holevo1973statistical,holevo2011probabilistic,nagaoka2005new,nagaoka2005generalization,conlon2021efficient}, which provide lower bounds on how small the mean squared error (MSE) in the estimate of the unknown parameters can be. The different Cram{\'{e}}r--Rao bounds apply in different settings. For example, the Holevo Cram{\'{e}}r--Rao bound, $\mathcal{C}_\text{H}$, (HCRB)~\cite{holevo1973statistical,holevo2011probabilistic} can be asymptotically approached through collective measurements on infinitely many copies of the quantum state~\cite{kahn2009local, yamagata2013quantum, yang2019attaining}. The Nagaoka--Hayashi Cram{\'{e}}r--Rao bound, $\mathcal{C}_\text{NH}$, (NHCRB)~\cite{nagaoka2005new,nagaoka2005generalization,conlon2021efficient} applies when we consider performing measurements on a single copy of the quantum state. However, it has recently been shown that this is not always a tight bound in this setting~\cite{hayashi2023tight}. The symmetric logarithmic derivative Cram{\'{e}}r--Rao bound, $\mathcal{C}_\text{S}$, (SLDCRB)~\cite{helstrom1967minimum,helstrom1968minimum} (note that this is often referred to as the quantum Cram{\'{e}}r--Rao bound or Helstrom bound) is of particular interest as when estimating a single parameter it represents a tight bound, i.e. there is always a measurement which saturates this bound~\cite{braunstein1994statistical}.

The necessary and sufficient conditions for saturating the SLDCRB in the limit where one can perform a collective measurement on infinitely many copies of the quantum state are known~\cite{ragy2016compatibility}. The attainability of the SLDCRB in the single-copy setting is a more subtle issue and has been listed as one of five open problems in quantum information theory~\cite{horodecki2022five}. Partial solutions to this problem have been presented in the view of quantum metrology~\cite{pezze2017optimal,yang2019optimal,PhysRevLett.128.250502,PhysRevA.105.062442,amari2000methods,nurdin2024saturability}. For full rank probe states the problem is completely solved. A necessary and sufficient condition for saturating the SLDCRB in this case is that the SLD operators, $L_i$, must commute with one another, $[L_i,L_j]=0,\;\forall i,j$. Furthermore, Suzuki, Yang and Hayashi (see Appendix B1 of Ref.~\cite{suzuki2020quantum}) gave the necessary and sufficient condition for the most general rank-deficient case. In this sense, the open problem appears to be solved. An important point for understanding the solution presented in Ref.~\cite{suzuki2020quantum} is the fact that in the rank-deficient case, the SLD operators are not unique - we are free to choose certain elements of these operators. Suzuki, Yang and Hayashi's solution then states that the SLDCRB is attainable if the SLD operators can be chosen such that they commute on the extended space. However, their solution does not provide conditions for when the SLD operators can be chosen in this manner. In this sense, the open problem in Ref.~\cite{horodecki2022five} could be considered not solved~\cite{Lukaszcomms2023}. The extra degrees of freedom associated with the elements of the SLD operators we are free to choose makes the problem significantly more general and harder to solve. Finally, we note that it has recently been shown that if the SLDCRB cannot be saturated in the single-copy setting then it cannot be saturated with a collective measurement on any finite number of copies of the quantum state~\cite{conlon2022gap}\textemdash~providing further motivation for solving the single-copy attainability problem.

In this paper we show the importance of the extended space when considering this problem. We present five examples which demonstrate different aspects of the necessary and sufficient conditions for saturating the SLDCRB. Our examples demonstrate the following: 1) The SLD operators commuting on the support of the probe state is a necessary but not sufficient condition for the attainability of the SLDCRB. 2) The SLD operators commuting on the support and support-kernel space is neither a necessary nor sufficient condition for the attainability of the SLDCRB. 3) The SLD operators commuting on the support and kernel space is neither a necessary nor sufficient condition for the attainability of the SLDCRB. We then provide necessary and sufficient conditions for the attainability of the SLDCRB in the case when the kernel space is one-dimensional. The rest of the paper is set out as follows. We first introduce the necessary preliminary material in section~\ref{sec:prelim}. In section~\ref{sec:equality} we present a brief summary of the known conditions for various Cram{\'{e}}r--Rao bounds to be equal. Then in sections~\ref{egA} to \ref{exE} we present our examples. In section.~\ref{sec:1D}, we present our results for the one-dimensional kernel case. Finally, in section~\ref{sec:conc} we conclude. 

%
%

\section{Preliminaries}
\label{sec:prelim}
We consider a density matrix $\mathcal{S}_\theta$ in a $d$ dimensional Hilbert space $\mathcal{H}$, which is a function of $n$ parameters $\mathcal{S}_\theta$, $\theta=(\theta_1,\theta_2,\hdots,\theta_n)$\footnote{Going forward, we shall drop the explicit dependence on $\theta$.}. We shall consider rank-deficient quantum states with rank $r<d$. The quantum state has an associated support and kernel. We shall write the density matrix $\mathcal{S}$ as
\begin{equation}
\mathcal{S}=\sum_{i=1}^{r}c_i\ket{s_i}\bra{s_i}\;,
\end{equation}
where $\sum_{i=1}^{r}c_i=1$, $c_i>0$, and we denote a set of eigenvectors for the support space by $\{\ket{s_i}\}$. The kernel space is also defined by $\mathcal{S}$ and its derivatives at the true parameter value. We shall denote the orthonormal basis vectors for the kernel space as $\{\ket{k_i}\}$. The derivatives of a quantum state may exist in the support and support-kernel spaces but not in the kernel space, where the support kernel space is spanned by the outer product of the support and kernel vectors ${\ket{s_i}\bra{k_j}}$ and ${\ket{k_i}\bra{s_j}}$. We can therefore choose a basis such that the quantum state and its derivatives can be written as 
\begin{align}
\label{Sdiagbasis}
\mathcal{S}
  = \begin{pmatrix}S^\text{s}&\bigzero\\
    \bigzero&\bigzero\end{pmatrix}\;,\;
S_j
  = \begin{pmatrix}S_j^\text{s}&S_j^\text{sk}\\[0.2cm]
    S_j^\text{ks}&\bigzero\end{pmatrix}\;,
\end{align}
where we use $\mathbf{0}$ to denote a block matrix of zeros. Note that $\mathcal{S}$ existing in $\mathcal{H}$, does not necessarily mean that the derivatives span the whole space. Throughout this manuscript we shall assume that the $S_j$ are linearly independent to ensure the problem of estimating the unknown parameters is well defined. 

\subsection{Most informative bound}
\label{secCMI}
We can define the most informative bound $\mathcal{C}_\text{MI}$ as the MSE which would be attained from the best possible measurement on individual quantum states. Therefore, the conditions for the attainability of the SLDCRB refers to the conditions for $\mathcal{C}_\text{S}=\mathcal{C}_\text{MI}$. The most informative bound can be defined in terms of a positive operator valued measure (POVM). A POVM is described by a set of positive linear
operators, $\{\Pi_{k}\}$, which sum to the identity
\begin{equation}
\label{Eq:BGsumpi}
\sum_{k}\Pi_{k}=\mathbb{I}_d\;.
\end{equation}
The $k$-th measurement outcome occurs with probability
$p_k=\text{tr}\{\mathcal{S}\Pi_{k}\}$. Based on the different measurement outcomes we can
construct an unbiased estimator for the parameters of interest,
$\hat{\theta}$. An estimator maps each measurement outcome to an estimate of $\theta$.
Let $\hat{\theta}_{j,k}$ be the estimate of $\theta_j$ corresponding to the measurement outcome $k$. A locally unbiased estimator must satisfy
\begin{equation}
\sum_k\hat{\theta}_{j,k}\text{tr}\{\mathcal{S}\Pi_{k}\}=\theta_{j}\;,
\label{Eq:Bg:POVMunbiased1}
\end{equation}
where $\theta_{j}$ is the local parameter value and
\begin{equation}
\sum_k\hat{\theta}_{j,k}\text{tr}\{S_i\Pi_{k}\}=\delta_{i,j}\;.
\label{Eq:Bg:POVMunbiased2}
\end{equation}
 For estimating several parameters
simultaneously the MSE matrix,
$V$, corresponding to a given POVM and estimator coefficients has elements given by
\begin{equation}
\label{eq:parmsemat}
[V]_{ij}=\sum_{k}(\hat{\theta}_{i,k}-\theta_{i})(\hat{\theta}_{j,k}-\theta_{j})p_k\;,
\end{equation}
where the sum is over all possible
measurement outcomes. We shall denote the optimal POVM and estimator coefficients which minimise the trace of $V$ as $\{\Pi_{k}^*\}$ and $\hat{\theta}_{j,k}^*$. This POVM and estimator coefficients saturate the most informative bound.

\subsection{Symmetric logarithmic derivative Cram\'{e}r--Rao bound}

The SLD operators $L_i$ are defined through the following equation
\begin{equation}
\label{eq:SLDdef}
S_i=\frac{1}{2}(\mathcal{S}L_i+L_i\mathcal{S})\;.
\end{equation}
Note that the elements of the SLD operators in the kernel space are not specified by this equation. When we do not consider an extended space, the SLD operator can be written in the same basis used in Eq.~\eqref{Sdiagbasis} as
\begin{align}
 L_j
  = \begin{pmatrix}L_j^\text{s}&L_j^\text{sk}\\
    L_j^\text{ks}&L_j^\text{k}\\
 \end{pmatrix}\;.
\end{align}
From Eq.~\eqref{eq:SLDdef}, the $L_j^\text{s}$ and $L_j^\text{sk}$ elements are uniquely defined, whereas the $L_j^\text{k}$ can be chosen freely. The quantum Fisher information matrix is then defined as
\begin{equation}
J_{\text{S},ij}=\frac{1}{2}\text{tr}\{\mathcal{S}(L_iL_j+L_jL_i)\}\;,
\end{equation}
and the corresponding SLDCRB is
\begin{equation}
\mathcal{C}_\text{S}=\text{Tr}\{J_{\text{S}}^{-1}\}\;,
\end{equation}
where we use $\text{Tr}\{\}$ to denote the trace over a classical matrix. The elements of the SLD operators in the kernel space do not influence the SLDCRB, however they can in some instances be chosen such that the SLD operators either commute or do not commute.

 Given a rank-deficient state $\mathcal{S}$ and SLD operators $L_j$,
we can extend the state and its derivatives by introducing an
extended space $\tilde{H}=H\oplus H_{\text{e}}$ where $H_{\text{e}}$ has dimensions $e$. Therefore, the extended Hilbert space has dimensions $d+e$. In this extended space, the quantum state and its derivatives can be written as
\begin{align}
\mathcal{S}
  = \begin{pmatrix}S^\text{s}&\bigzero&\bigzero\\
    \bigzero&\bigzero&\bigzero\\
    \bigzero&\bigzero&\bigzero\end{pmatrix}\;,\;
S_j
  = \begin{pmatrix}S_j^\text{s}&S_j^\text{sk}&\bigzero\\[0.2cm]
    S_j^\text{ks}&\bigzero&\bigzero\\[0.2cm]
    \bigzero&\bigzero&\bigzero\end{pmatrix}\;.
\end{align}
Note that this does not correspond to any physical change in the problem. Similarly, as the SLD operators are defined only in the support and support-kernel
space, we are free to extended them to
\begin{align}
\label{eqSLDopextended}
  \mathcal{L}_j
  = \begin{pmatrix}L_j^\text{s}&L_j^\text{sk}&\bigzero\\[0.2cm]
    L_j^\text{ks}&L_j^\text{k}&L_j^\text{ke}\\[0.2cm]
    \bigzero&L_j^\text{ek}&L_j^\text{e}\end{pmatrix}\;.
\end{align}
$L_j^\text{k}$,  $L_j^\text{ke}$ and
$L_j^\text{e}$  can now be chosen freely without changing the SLDCRB. Note that $L_j^\text{se}$ is the zero matrix by definition, see Eq.~\eqref{eq:SLDdef}. We denote the orthonormal basis vectors for the extended space $H_{\text{e}}$
by $\{\ket{e_n}\}$. For example, the
notation $L^\text{sk}$ denotes the matrix with elements
$\bra{s_n}\mathcal{L}\ket{k_m}$. When considering the full SLD operators, we shall not use any superscripts. We use the notation $L$ for SLD operators where any free elements are chosen to be 0 and $\mathcal{L}$ for SLD operators where we choose the free elements ourselves.

We are now in a position to state our main results. To illustrate the essential elements of when this extended space is important, we shall focus on two-parameter estimation.
\begin{enumerate}
\item Result A:  The projection of the commutator of the SLD operators on the support space equal to 0, i.e. $\bra{s_n} [L_1,L_2]\ket{s_m}=0$, is necessary but not sufficient for the existence of an extended SLD operator such that $ [\mathcal{L}_1,\mathcal{L}_2]=0$. Note that the necessary part follows from many existing results, e.g. Ref.~\cite{conlon2022gap}. Our contribution is to present an explicit example where this condition is not sufficient. 

\item Result B: $\bra{s_n}[L_1,L_2]\ket{k_m} = 0$ is neither a necessary nor sufficient condition for the attainability of the SLDCRB. We present an example where the unextended SLD operators do not commute on the support kernel space $\bra{s_n}[L_1,L_2]\ket{k_m} \neq 0$, but the extended SLD operators do commute $[\mathcal{L}_1,\mathcal{L}_2]  = 0$.  We then present an example where $\bra{s_n}[L_1,L_2]\ket{k_m} = 0$, but it is not possible to choose the extended SLD operators such that they commute.

%
%
   We also present an example with three pairs of SLD operators corresponding to the same model $(L_1,L_2)$, $(\mathcal{L}_1,\mathcal{L}_2)$
   and $(\mathcal{\tilde{L}}_1,\mathcal{\tilde{L}}_2)$. For this example we find that $\bra{s_n}[L_1,L_2]\ket{k_m} \neq 0$, $\bra{s_n}[\mathcal{L}_1,\mathcal{L}_2]\ket{k_m} = 0$, \newline\noindent\mbox{$\bra{k_n}[\mathcal{L}_1,\mathcal{L}_2]\ket{k_m} \neq 0$} and  $[\mathcal{\tilde{L}}_1,\mathcal{\tilde{L}}_2]  = 0$. 
   
\item Result C: \mbox{$\bra{k_n} [L_1,L_2]\ket{k_m}=0$} is neither a necessary nor sufficient condition for the attainability of the SLDCRB. Our first example, used for result A above, also demonstrates that the projection of the commutator of the SLD operators on the kernel space being equal to 0,  \mbox{$\bra{k_n} [L_1,L_2]\ket{k_m} = 0$}, is not a sufficient condition for the attainability of the SLDCRB. In this example \mbox{$\bra{k_n} [L_1,L_2]\ket{k_m} = 0$}, but it is not possible to find extended SLD operators which commute. We also present an example where \mbox{$\bra{k_n} [L_1,L_2]\ket{k_m} \neq 0$} but \mbox{$[\mathcal{L}_1,\mathcal{L}_2]=0$}. 
   
  \end{enumerate}

The above results demonstrate the importance of considering the extended space. However, the physical reasoning behind the importance of the extended space may not be immediately clear. A partial explanation comes from the Naimark extension which states that a POVM can always be converted to a projective measurement in a larger Hilbert space~\cite{neumark1943spectral}. Note that we are considering the direct sum Naimark extension ($\tilde{H}=H\oplus H_{\text{e}}$), but the tensor product extension is also possible ($H\otimes H_{\text{e}}$). In the direct sum picture, the simplest Naimark extension is to find the extended space where the SLD operators commute, so that we can measure along the eigenvectors of the SLD operatorm, see e.g Appendix C.2 of Ref.~\cite{conlon2022gap}. The tensor product extension involves a unitary matrix which acts on the quantum state and an ancilla state followed by a projective measurement~\cite{chen2007ancilla}.

It is possible to consider the SLD operators without the extended space, and with the kernel terms set to zero, i.e. $L_{i}$ in our notation. In this case, the saturation conditions involve finding POVMs that must satisfy theorems 1 and 2 of Yang\textit{ et al}~\cite{yang2019optimal}. By considering the extended space, we don't need to concern ourselves with the POVMs. Once, commuting SLD operators in the extended space have been found, the optimal measurement is given by the joint eigenvectors of the extended SLD operators when projected back into the original space.

We know that $\mathcal{C}_\text{S}\leq\mathcal{C}_\text{H}\leq\mathcal{C}_\text{NH}\leq\mathcal{C}_\text{MI}$. In order for $\mathcal{C}_\text{S}=\mathcal{C}_\text{MI}$, we therefore require $\mathcal{C}_\text{S}=\mathcal{C}_\text{H}=\mathcal{C}_\text{NH}$. The condition for equality between $\mathcal{C}_\text{S}$ and $\mathcal{C}_\text{NH}$ as proven in Ref.~\cite{conlon2022gap}, therefore shows that the SLD operators commuting on the support of $\mathcal{S}$ is a necessary condition for $\mathcal{C}_\text{S}=\mathcal{C}_\text{MI}$. Result A shows that this is not a sufficient condition. Results B and C show that the SLD operators not commuting on the kernel-support or kernel-kernel subspace does not imply that the SLD operators cannot be chosen to commute on an extended space. To sum up
\begin{itemize}
\item $\bra{s_n}[L_1,L_2]\ket{s_m} = 0$ is necessary but not sufficient (Example A)
\item $\bra{s_n}[L_1,L_2]\ket{s_m} = 0$  and $\bra{s_n}[L_1,L_2]\ket{k_m} = 0$ is not necessary (Example B/D)
\item $\bra{s_n}[\mathcal{L}_1,\mathcal{L}_2]\ket{s_m} = 0$  and $\bra{s_n}[\mathcal{L}_1,\mathcal{L}_2]\ket{k_m} = 0$ is not sufficient (Example E)
\item $\bra{s_n}[L_1,L_2]\ket{s_m} = 0$  and $\bra{k_n}[L_1,L_2]\ket{k_m} = 0$ is not necessary (Example C/D) and not sufficient (Example A)
\end{itemize}

\subsection{Conditions for Equality}
\label{sec:equality}
Finally, we summarise the conditions for equality between the different bounds: the most informative bound, the NHCRB, the HCRB and the SLDCRB. These conditions are presented in Table.~\ref{T1}. In this table $X_{\text{NH},i}$ and $X_{\text{H},i}$ represent Hermitian matrices which are the solutions to the optimisation problem found in the NHCRB and HCRB respectively. $\text{trAbs\{A\}}$ represents the sum of the absolute values of the matrix $A$. $\{\Pi_{k}^*\}$ represents the optimal POVM as described in section~\ref{secCMI}. $[\mathcal{L}_{i}\,,\,\mathcal{L}_{j}]=0,\,\forall i,j$ is a necessary and sufficient condition for $\mathcal{C}_\text{S}=\mathcal{C}_\text{MI}$. In the remainder of this paper we shall ask under what circumstances we can choose $[\mathcal{L}_{i}\,,\,\mathcal{L}_{j}]=0,\,\forall i,j$.

It should be noted that while, throughout this manuscript we discuss the conditions for $\mathcal{C}_\text{S}=\mathcal{C}_\text{MI}$, it may be more accurate to say that we are examining the conditions for when the classical Fisher information matrix is equal to the quantum Fisher information matrix. For well behaved problems the two are equivalent.

\begin{table}
\begin{tabular}{ | M{3em} |  M{4.2cm}| M{4.8cm}| M{4.2cm} | } 
  \hline
 &  $\mathcal{C}_\text{NH}$& $\mathcal{C}_\text{H}$& $\mathcal{C}_\text{S}$ \\ 
 \hline
 $\mathcal{C}_\text{MI}$ & $$X_{\text{NH},i}=\sum_{k}(\hat{\theta}_{i,k}^*-\theta_{i})\Pi_{k}^*$$~\cite{conlon2021efficient} & $$X_{\text{H},i}=\sum_{k}(\hat{\theta}_{i,k}^*-\theta_{i})\Pi_{k}^*$$ \& $\mathbb{L}_{\text{H},ij}=\mathbb{L}_{\text{H},ji}$, $ \mathbb{L}_{\text{H}}\geq {X_{\text{H}}} X_{\text{H}}^\intercal$ ~\cite{conlon2021efficient}
 &$$[\mathcal{L}_{i}\,,\,\mathcal{L}_{j}]=0,\,\forall i,j$$~\cite{suzuki2020quantum}  \\ 
   \hline
$\mathcal{C}_\text{NH}$   &\diagbox[innerwidth=4.2cm,height=30pt,dir=SW]{}{} & $\mathbb{L}_{\text{H},ij}=\mathbb{L}_{\text{H},ji}$, $ \mathbb{L}_{\text{H}}\geq {X_{\text{H}}} X_{\text{H}}^\intercal$~\cite{conlon2021efficient}&$$\text{trAbs}\{\mathcal{S}[L_{i}\,,\,L_{j}]\}=0,\,\forall i,j$$~\cite{conlon2022gap}  \\ 
   \hline
  $\mathcal{C}_\text{H}$& \diagbox[innerwidth=4.2cm,height=30pt,dir=SW]{}{} &\diagbox[innerwidth=4.8cm,height=30pt,dir=SW]{}{}  & $$\text{tr}\{\mathcal{S}[L_{i}\,,\,L_{j}]\}=0,\,\forall i,j$$~\cite{ragy2016compatibility} \\ 
   \hline
\end{tabular}
\caption{\label{T1}\textbf{Equality conditions for different Cram{\'{e}}r--Rao bounds.}}
\end{table}

\section{Examples supporting our results}
\subsection{Example A}
\label{egA}
Our first example is a rank 3 quantum state in a 4-dimensional Hilbert space. Consider the model with
  \begin{align}
    S^\text{s}=\frac{1}{3}\begin{pmatrix}1&0&0\\0&1&0\\0&0&1\end{pmatrix} \;,
  \end{align}
and the following derivatives 
  \begin{align}
  S_1
  = \frac{1}{3}\begin{pmatrix}\begin{matrix}1&0&0\\0&1&0\\0&0&-2\end{matrix}&\rvline&\begin{matrix}0\\\frac{1}{2}\\0\end{matrix}\\\hline
    \begin{matrix}0&\frac{1}{2}&0\end{matrix}&\rvline&0
   \end{pmatrix}\;,\quad\text{and}\quad
   S_2
  =\frac{1}{3} \begin{pmatrix}\begin{matrix}0&1&0\\1&0&1\\0&1&0\end{matrix}&\rvline&\begin{matrix}0\\0\\-\frac{3}{2}\end{matrix}\\\hline
    \begin{matrix}0&0&-\frac{3}{2}\end{matrix}&\rvline&0
    \end{pmatrix}\;.
\end{align}
This model gives rise to the following SLD operators
  \begin{align}
 \mathcal{L}_1
  = \begin{pmatrix}\begin{matrix}1&0&0\\0&1&0\\0&0&-2\end{matrix}&\rvline&\begin{matrix}0\\1\\0\end{matrix}\\\hline
    \begin{matrix}0&1&0\end{matrix}&\rvline&k_1\end{pmatrix}\;,\quad\text{and}\quad
  \mathcal{L}_2
  = \begin{pmatrix}\begin{matrix}0&1&0\\1&0&1\\0&1&0\end{matrix}&\rvline&\begin{matrix}0\\0\\-3\end{matrix}\\\hline
    \begin{matrix}0&0&-3\end{matrix}&\rvline&k_2\\
    \end{pmatrix}\;,
\end{align}
where $k_1$ and $k_2$ are free parameters. Note that, it is not necessary to present the derivatives and SLD operators, we do so only for completeness. Note also that if $k_1=k_2=0$, $\mathcal{L}_1=L_1$ and $\mathcal{L}_2=L_2$. For this model we find that the quantum Fisher information matrix is given by
\begin{equation}
J_{\text{S}}=\begin{pmatrix}\frac{7}{3}&0\\
0&\frac{13}{3}\end{pmatrix}\;,
\end{equation}
and the associated SLDCRB is given by
\begin{equation}
\mathcal{C}_\text{S}=\frac{60}{91}\;.
\end{equation}
For this example we have $\text{tr}\{S[L_1,L_2]\}=0$, so therefore $\mathcal{C}_\text{S}=\mathcal{C}_\text{H}$. Additionally $\text{trAbs}{S[L_1,L_2]}=0$, so therefore $\mathcal{C}_\text{S}=\mathcal{C}_\text{NH}$~\cite{conlon2022gap}.

For this example, we see that
\begin{align}
[\mathcal{L}_1,\mathcal{L}_2]=\begin{pmatrix}\bigzero&\rvline&\begin{matrix}-1\\k_2\\5+3k_1\end{matrix}\\\hline
    \begin{matrix}\hspace{0.2cm}1&\hspace{0.3cm}-k_2&-5-3k_1\end{matrix}&\rvline&0\end{pmatrix}\;.
\end{align}
Therefore, there is no way to choose $\bra{s_n}[\mathcal{L}_1,\mathcal{L}_2]\ket{k_m}=0$. By choosing extended SLD operators as in Eq.~\eqref{eqSLDopextended}, the elements of the extended space do not influence $\bra{s_n}[\mathcal{L}_1,\mathcal{L}_2]\ket{k_m}$ terms. This example shows that \mbox{$\bra{s_n}[\mathcal{L}_1,\mathcal{L}_2]\ket{s_m}=0$} is not a sufficient condition for the attainability of the SLDCRB. This example also shows that $\bra{s_n}[\mathcal{L}_1,\mathcal{L}_2]\ket{s_m}=0$ and $\bra{k_n}[\mathcal{L}_1,\mathcal{L}_2]\ket{k_m}=0$ is not a sufficient condition for the attainability of the SLDCRB.

This example also offers a simple analytic counter-example demonstrating that the NHCRB is not necessarily attainable for two parameter estimation. This is stated as the following corollary
\begin{corollary}
For estimating two parameters the NHCRB is not always a tight bound, i.e. it is possible to find examples where $\mathcal{C}_\text{MI}>\mathcal{C}_\text{NH}$.
\end{corollary}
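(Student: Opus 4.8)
The plan is to let the model of Example~A itself serve as the witness, so no new construction is needed. Two ingredients must be assembled. First, for this model $\mathcal{C}_\text{NH}=\mathcal{C}_\text{S}$: this follows from the equality criterion recorded in Table~\ref{T1} (from Ref.~\cite{conlon2022gap}), since the example has already been shown to satisfy $\text{trAbs}\{\mathcal{S}[L_1,L_2]\}=0$. Second, and this is the substantive point, I would argue that $\mathcal{C}_\text{S}<\mathcal{C}_\text{MI}$ strictly. Because $\mathcal{C}_\text{S}\le\mathcal{C}_\text{MI}$ always holds, it suffices to rule out equality, and by the necessary-and-sufficient condition $[\mathcal{L}_i,\mathcal{L}_j]=0$ (Ref.~\cite{suzuki2020quantum}) this reduces to showing that no admissible choice of the free blocks of the extended SLD operators makes $[\mathcal{L}_1,\mathcal{L}_2]$ vanish.

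To establish this, I would compute the support--kernel block of the commutator for a fully general extension $\mathcal{L}_j$ as in Eq.~\eqref{eqSLDopextended}. Since the support--extension block $L_j^\text{se}$ is forced to zero by the defining equation~\eqref{eq:SLDdef}, the block $\bra{s_n}[\mathcal{L}_1,\mathcal{L}_2]\ket{k_m}$ depends only on the (fixed) blocks $L_j^\text{s}$, $L_j^\text{sk}$ and on the original kernel blocks $L_j^\text{k}$; the genuinely new degrees of freedom $L_j^\text{ke}$ and $L_j^\text{e}$ drop out. As the kernel here is one-dimensional, $L_j^\text{k}$ is a single scalar $k_j$, and the explicit computation displayed for the example gives this block as $(-1,\;k_2,\;5+3k_1)^\intercal$. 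Its first entry equals $-1$ independently of $k_1,k_2$ and of the extension, so $[\mathcal{L}_1,\mathcal{L}_2]\ne 0$ for every admissible choice. Hence $\mathcal{C}_\text{S}\ne\mathcal{C}_\text{MI}$, and therefore $\mathcal{C}_\text{S}<\mathcal{C}_\text{MI}$. Combining the two ingredients yields $\mathcal{C}_\text{NH}=\mathcal{C}_\text{S}<\mathcal{C}_\text{MI}$, which is exactly the claimed strict inequality $\mathcal{C}_\text{MI}>\mathcal{C}_\text{NH}$ for a two-parameter model.

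The main obstacle — really the only non-mechanical step — is the justification that the extended space contributes nothing to the support--kernel block of the commutator, i.e.\ that the constraint $L_j^\text{se}=0$ genuinely kills the dependence of $\bra{s_n}[\mathcal{L}_1,\mathcal{L}_2]\ket{k_m}$ on $L_j^\text{ke}$ and $L_j^\text{e}$. Everything else is routine: checking that the $S_j$ form a legitimate, well-defined statistical model with linearly independent derivatives, the two trace identities $\text{tr}\{\mathcal{S}[L_1,L_2]\}=0$ and $\text{trAbs}\{\mathcal{S}[L_1,L_2]\}=0$, and chaining the inequalities through $\mathcal{C}_\text{S}\le\mathcal{C}_\text{NH}\le\mathcal{C}_\text{MI}$. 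A useful feature worth emphasising is that the argument is robust and purely qualitative: it never uses the numerical values of $\mathcal{C}_\text{NH}$ or $\mathcal{C}_\text{MI}$, only the structural fact that equality in the SLD bound fails for this model.
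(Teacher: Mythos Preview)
Your proposal is correct and follows essentially the same approach as the paper: use Example~A as the witness, invoke $\text{trAbs}\{\mathcal{S}[L_1,L_2]\}=0$ to get $\mathcal{C}_\text{NH}=\mathcal{C}_\text{S}$, and invoke the impossibility of choosing commuting extended SLDs (via the obstruction in the support--kernel block) to get $\mathcal{C}_\text{MI}>\mathcal{C}_\text{S}$. Your write-up is slightly more explicit than the paper's about why the extension cannot rescue the support--kernel block, but the logic and the ingredients are identical.
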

\begin{proof}
For the example considered above $\text{trAbs}{S[L_1,L_2]}=0$, therefore $\mathcal{C}_\text{NH}=\mathcal{C}_\text{S}$~\cite{conlon2022gap}. However, for the same example, it is not possible to choose $[\mathcal{L}_1,\mathcal{L}_2]=0$, therefore $\mathcal{C}_\text{MI}>\mathcal{C}_\text{S}$~\cite{suzuki2020quantum}. Therefore, $\mathcal{C}_\text{MI}>\mathcal{C}_\text{NH}$.
\end{proof}

A numerical search for the optimal POVM demonstrates this result. We have $\mathcal{C}_\text{S}=\mathcal{C}_\text{NH}\approx0.6593$, whereas the optimal POVM we could find has a variance of 0.6640.

\subsection{Example  B}
\label{egB}
We next consider an example with $r=2$, $d=3$ which demonstrates the necessity of choosing the free elements of the SLD operators appropriately. We consider the following model
  \begin{align}
    S^\text{s}=\frac{1}{2}\begin{pmatrix}1&0\\0&1\end{pmatrix}   \;,
  \end{align}
and the following derivatives
  \begin{align}
S_1
  = \frac{1}{2}\begin{pmatrix}\begin{matrix}0&1\\[0.1cm]1&0\end{matrix}&\rvline&\begin{matrix}\frac{1}{2}\\[0.1cm]\frac{1}{2}\end{matrix}\\[0.1cm]
    \hline
    \begin{matrix}\frac{1}{2}&\frac{1}{2}\end{matrix}&\rvline&0\end{pmatrix},\quad\text{and}\quad
S_2
  = \frac{\mathrm{1}}{2}\begin{pmatrix}\begin{matrix}1&1\\[0.1cm]1&-1\end{matrix}&\rvline&\begin{matrix}-\frac{1}{2}\\[0.1cm]\frac{1}{2}\end{matrix}\\
    \hline
    \begin{matrix}-\frac{1}{2}&\frac{1}{2}\end{matrix}&\rvline&0\end{pmatrix}\,\;.
\end{align}
This model gives rise to the following SLD operators
  \begin{align}
\mathcal{L}_1
  = \begin{pmatrix}\begin{matrix}0&1\\1&0\end{matrix}&\rvline&\begin{matrix}1\\1\end{matrix}\\
    \hline
    \begin{matrix}1&1\end{matrix}&\rvline&k_1\end{pmatrix},\quad\text{and}\quad
 \mathcal{L}_2
  = \begin{pmatrix}\begin{matrix}1&1\\1&-1\end{matrix}&\rvline&\begin{matrix}-1\\1\end{matrix}\\
    \hline
    \begin{matrix}-1&1\end{matrix}&\rvline&k_2\end{pmatrix}\,,
\end{align}
where, as before, $k_1$ and $k_2$ are free parameters, commonly taken to be 0. For this model we find that the quantum Fisher information matrix is
\begin{equation}
J_{\text{S}}=\begin{pmatrix}2&1\\
1&3\end{pmatrix}\;,
\end{equation}
and the associated SLDCRB is given by
\begin{equation}
\mathcal{C}_\text{S}=1\;.
\end{equation}
Note that for this example we have $\text{tr}\{S[L_1,L_2]\}=0$, so therefore $\mathcal{C}_\text{S}=\mathcal{C}_\text{H}$. Additionally $\text{trAbs}{S[L_1,L_2]}=0$, so therefore $\mathcal{C}_\text{S}=\mathcal{C}_\text{NH}$~\cite{conlon2022gap}.

For this example, we find that
\begin{equation}
[\mathcal{L}_1,\mathcal{L}_2]=\begin{pmatrix}\bigzero&\rvline&\begin{matrix}-1+k_1+k_2\\[0.1cm]-1+k_1+k_2\end{matrix}\\
    \hline
    \begin{matrix}1-k_1-k_2&1+k_1-k_2\end{matrix}&\rvline&0\end{pmatrix}\,\;.
\end{equation}
We see that if we choose $k_1=k_2=0$, then $[L_1,L_2]=[\mathcal{L}_1,\mathcal{L}_2]\neq0$. However, by choosing $k_1=0,k_2=1$, we can make $[\mathcal{L}_1,\mathcal{L}_2]=0$. This example shows that $\bra{s_n}[L_1,L_2]\ket{s_m}=0$ and $\bra{s_n}[L_1,L_2]\ket{k_m}=0$ is not a necessary condition for the attainability of the SLDCRB. For this example, we can construct a 3-outcome optimal POVM that saturates the SLDCRB.

\subsection{Example  C}
\label{egC}
Consider the $r=2$, $d=4$ model with
  \begin{align}
  \label{rhoExC}
    S^\text{s}=\frac{1}{2}\begin{pmatrix}1&0\\0&1\end{pmatrix}   \;,
  \end{align}
and the following derivatives 
  \begin{align}
S_1
  = \frac{1}{2}\begin{pmatrix}\begin{matrix}1&0\\0&-1\end{matrix}&\rvline&\begin{matrix}\frac{1}{2}&0\\0&\frac{1}{2}\end{matrix}\\
    \hline
    \begin{matrix}\frac{1}{2}&0\\0&\frac{1}{2}\end{matrix}&\rvline&\bigzero\end{pmatrix},\quad\text{and}\quad
S_2
  = \frac{\mathrm{i}}{2}\begin{pmatrix}\begin{matrix}0&1\\-1&0\end{matrix}&\rvline&\begin{matrix}0&\frac{1}{2}\\\frac{1}{2}&0\end{matrix}\\
    \hline
    \begin{matrix}0&-\frac{1}{2}\\-\frac{1}{2}&0\end{matrix}&\rvline&\bigzero\end{pmatrix}\,\;.
\end{align}
This model gives rise to the following SLD operators
  \begin{align}
L_1
  = \begin{pmatrix}\begin{matrix}1&0\\0&-1\end{matrix}&\rvline&\begin{matrix}1&0\\0&1\end{matrix}\\
    \hline
    \begin{matrix}1&0\\0&1\end{matrix}&\rvline&\bigzero\end{pmatrix},\quad\text{and}\quad
 L_2
  = \begin{pmatrix}\begin{matrix}0&\mathrm{i}\\-\mathrm{i}&0\end{matrix}&\rvline&\begin{matrix}0&\mathrm{i}\\\mathrm{i}&0\end{matrix}\\
    \hline
    \begin{matrix}0&-\mathrm{i}\\-\mathrm{i}&0\end{matrix}&\rvline&\bigzero\end{pmatrix}\,.
\end{align}
For this model we find that the quantum Fisher information matrix is given by
\begin{equation}
J_{\text{S}}=\begin{pmatrix}2&0\\
0&2\end{pmatrix}\;,
\end{equation}
and the associated SLDCRB is given by
\begin{equation}
\label{eqSLDCRBExC}
\mathcal{C}_\text{S}=1\;.
\end{equation}
Once again, for this example we have $\text{tr}\{S[L_1,L_2]\}=0$ and $\text{trAbs}{S[L_1,L_2]}=0$, so therefore $\mathcal{C}_\text{S}=\mathcal{C}_\text{H}=\mathcal{C}_\text{NH}$. 

The two operators do not commute:
  \begin{equation}  [L_1,L_2]   =
  \begin{pmatrix}\bigzero&\rvline&\bigzero\\
    \hline
\bigzero&\rvline&
  \begin{matrix}0&2\mathrm{i}\\2\mathrm{i}&0\end{matrix}\end{pmatrix}\,.
\end{equation}

 For the same model, consider the following choice of SLD operators
  \begin{align}
  \label{SLDopCommuteExC}
  \mathcal{L}_1
  = \begin{pmatrix}\begin{matrix}1&0\\0&-1\end{matrix}&\rvline&\begin{matrix}1&0\\0&1\end{matrix}\\
    \hline
    \begin{matrix}1&0\\0&1\end{matrix}&\rvline&
  \begin{matrix}1&0\\0&-1\end{matrix}\end{pmatrix}\;,\quad\text{and}\quad
  \mathcal{L}_2
  = \begin{pmatrix}\begin{matrix}0&\mathrm{i}\\-\mathrm{i}&0\end{matrix}&\rvline&\begin{matrix}0&\mathrm{i}\\\mathrm{i}&0\end{matrix}\\
    \hline
    \begin{matrix}0&-\mathrm{i}\\-\mathrm{i}&0\end{matrix}&\rvline&
  \begin{matrix}0&-\mathrm{i}\\\mathrm{i}&0\end{matrix}\end{pmatrix}\,.
\end{align}
These two operators commute $  [\mathcal{L}_1,\mathcal{L}_2]=0 $.

 Additionally, it is worth noting that there are many possible choices for $  [\mathcal{L}_1,\mathcal{L}_2]=0 $. Another example is the following
  \begin{align}
  \mathcal{\tilde{L}}_1
  = \begin{pmatrix}\begin{matrix}1&0\\0&-1\end{matrix}&\rvline&\begin{matrix}1&0\\0&1\end{matrix}&\rvline&\bigzero\\
    \hline
    \begin{matrix}1&0\\0&1\end{matrix}&\rvline&
    \begin{matrix}\bigzero\end{matrix}&\rvline&\begin{matrix}\sqrt{2}&0\\0&0\end{matrix}\\
    \hline
  \bigzero&\rvline&\begin{matrix}\sqrt{2}&0\\0&0\end{matrix}&\rvline&\bigzero\end{pmatrix}\;,\quad\text{and}\quad  \mathcal{\tilde{L}}_2
  = \begin{pmatrix}\begin{matrix}0&\mathrm{i}\\-\mathrm{i}&0\end{matrix}&\rvline&\begin{matrix}0&\mathrm{i}\\\mathrm{i}&0\end{matrix}&\rvline&\bigzero\\
    \hline
    \begin{matrix}0&-\mathrm{i}\\-\mathrm{i}&0\end{matrix}&\rvline&
    \begin{matrix}\bigzero\end{matrix}&\rvline&\begin{matrix}0&0\\\mathrm{i}\sqrt{2}&0\end{matrix}\\
    \hline
  \bigzero&\rvline&\begin{matrix}0&-\mathrm{i}\sqrt{2}\\0&0\end{matrix}&\rvline&\bigzero\end{pmatrix}\,.
\end{align}
Again, these two operators commute $  [\mathcal{\tilde{L}}_1,\mathcal{\tilde{L}}_2]=0 $. It is easy to verify that all three of these SLD operators give an SLDCRB of 1. This example shows that $\bra{s_n}[L_1,L_2]\ket{s_m}=0$ and $\bra{k_n}[L_1,L_2]\ket{k_m}=0$ is not a necessary condition for the attainability of the SLDCRB. We present a physical model and a measurement saturating the SLDCRB for this example in Appendices~\ref{apen:physicalstate} and \ref{apenMeasC} respectively.

\subsection{Example D}
\label{egD}
 Consider the $r=2$, $d=6$ model with
  \begin{align}
    S^\text{s}=\frac{1}{2}\begin{pmatrix}1&0\\0&1\end{pmatrix}    
  \end{align}
and derivatives equal to
  \begin{align}
S_1
  = \frac{1}{2}\begin{pmatrix}\begin{matrix}1&0\\0&-1\end{matrix}&\rvline&\begin{matrix}\frac{1}{2}&0&0&0\\0&0&0&\frac{1}{2}\end{matrix}\\
    \hline
    \begin{matrix}\frac{1}{2}&0\\0&0\\0&0\\0&\frac{1}{2}\end{matrix}&\rvline&\bigzero\end{pmatrix}\;,\quad\text{and}\quad
S_2
  = \frac{1}{2}\begin{pmatrix}\begin{matrix}0&1\\1&0\end{matrix}&\rvline&\begin{matrix}0&0&0&\frac{1}{2}\\[0.1cm]-\frac{1}{2}&-\frac{1}{2}&\frac{1}{2}&\frac{1}{2}\end{matrix}\\[0.1cm]
    \hline
    \begin{matrix}0&-\frac{1}{2}\\[0.1cm]0&-\frac{1}{2}\\[0.1cm]0&\frac{1}{2}\\[0.1cm]\frac{1}{2}&\frac{1}{2}\end{matrix}&\rvline&\bigzero\end{pmatrix}\,.
\end{align}
This problem gives rise to the following two SLD operators 
  \begin{align}
L_1
  = \begin{pmatrix}\begin{matrix}1&0\\0&-1\end{matrix}&\rvline&\begin{matrix}1&0&0&0\\0&0&0&1\end{matrix}\\
    \hline
    \begin{matrix}1&0\\0&0\\0&0\\0&1\end{matrix}&\rvline&\bigzero\end{pmatrix}\;,\quad\text{and}\quad
L_2
  = \begin{pmatrix}\begin{matrix}0&1\\1&0\end{matrix}&\rvline&\begin{matrix}0&0&0&1\\-1&-1&1&1\end{matrix}\\
    \hline
    \begin{matrix}0&-1\\0&-1\\0&1\\1&1\end{matrix}&\rvline&\bigzero\end{pmatrix}\,.
\end{align}
For this model we find that the quantum Fisher information matrix is given by
\begin{equation}
J_{\text{S}}=\begin{pmatrix}2&\frac{1}{2}\\[0.1cm]
\frac{1}{2}&\frac{7}{2}\end{pmatrix}\;,
\end{equation}
and the associated SLDCRB is given by
\begin{equation}
\label{eqSLDexD}
\mathcal{C}_\text{S}=\frac{22}{27}\;.
\end{equation}
As before, we have $\text{tr}\{S[L_1,L_2]\}=0$ and $\text{trAbs}{S[L_1,L_2]}=0$, so therefore $\mathcal{C}_\text{S}=\mathcal{C}_\text{H}=\mathcal{C}_\text{NH}$. The two SLD operators do not commute on either the support-kernel or kernel subspaces:
 \begin{align} [L_1,L_2]   =
  \begin{pmatrix}\bigzero&\rvline&\begin{matrix}0&0&0&0\\0&1&-1&-1\end{matrix}\\
    \hline
    \begin{matrix}0&0\\0&-1\\0&1\\0&1\end{matrix}&\rvline&
  \begin{matrix}0&0&0&2\\0&0&0&1\\0&0&0&-1\\-2&-1&1&0\end{matrix}\end{pmatrix}\,.
\end{align}

 We now show that we can choose the kernel of the SLD operators such that they commute on the support-kernel subspace but not on the kernel subspace. We are free to choose the SLD operators such that
  \begin{align}
  \mathcal{L}_1
  = \begin{pmatrix}\begin{matrix}1&0\\0&-1\end{matrix}&\rvline&\begin{matrix}1&0&0&0\\0&0&0&1\end{matrix}\\
    \hline
    \begin{matrix}1&0\\0&0\\0&0\\0&1\end{matrix}&\rvline&\begin{matrix}1&0&0&0\\0&-1&0&0\\0&0&-1&0\\0&0&0&-1\end{matrix}\end{pmatrix}\;,\quad\text{and}\quad
  \mathcal{L}_2
  = \begin{pmatrix}\begin{matrix}0&1\\1&0\end{matrix}&\rvline&\begin{matrix}0&0&0&1\\-1&-1&1&1\end{matrix}\\
    \hline
    \begin{matrix}0&-1\\0&-1\\0&1\\1&1\end{matrix}&\rvline&\begin{matrix}0&0&0&-1\\0&0&0&0\\0&0&0&0\\-1&0&0&0\end{matrix}\end{pmatrix}\,.
\end{align}
The two operators now commute on the support-kernel subspace but not the kernel subspace:
\begin{align}  [\mathcal{L}_1,\mathcal{L}_2]   =
  \begin{pmatrix}\bigzero&\rvline&\bigzero\\
    \hline
   \bigzero&\rvline&
  \begin{matrix}0&0&0&0\\0&0&0&1\\0&0&0&-1\\0&-1&1&0\end{matrix}\end{pmatrix}\,.
\end{align}

 Finally, we show that there exists an extension of the SLD operators such that they commute on the extended space. We can choose the SLD operators as   
\begin{align}
\label{EqExDcommute}
  \mathcal{\tilde{L}}_1
  = \begin{pmatrix}\begin{matrix}1&0\\0&-1\end{matrix}&\rvline&\begin{matrix}1&0&0&0\\0&0&0&1\end{matrix}&\rvline&\bigzero\\
    \hline
    \begin{matrix}1&0\\0&0\\0&0\\0&1\end{matrix}&\rvline&\begin{matrix}1&0&0&0\\0&-1&0&0\\0&0&-1&0\\0&0&0&-1\end{matrix}&\rvline&\begin{matrix}0&0\\0&-1\\1&0\\0&0\end{matrix}\\
     \hline
    \bigzero&\rvline&\begin{matrix}0&0&1&0\\0&-1&0&0\end{matrix}&\rvline&\begin{matrix}0&-1\\-1&0\end{matrix}
     \end{pmatrix}\;,\quad\text{and}\quad
  \mathcal{\tilde{L}}_2
  = \begin{pmatrix}\begin{matrix}0&1\\1&0\end{matrix}&\rvline&\begin{matrix}0&0&0&1\\-1&-1&1&1\end{matrix}&\rvline&\bigzero\\
    \hline
    \begin{matrix}0&-1\\0&-1\\0&1\\1&1\end{matrix}&\rvline&\begin{matrix}0&0&0&-1\\0&0&0&0\\0&0&0&0\\-1&0&0&0\end{matrix}&\rvline&\begin{matrix}0&0\\0&0\\0&0\\1&1\end{matrix}\\
    \hline
    \bigzero&\rvline&\begin{matrix}0&0&0&1\\0&0&0&1\end{matrix}&\rvline&\bigzero
\end{pmatrix}\,.
\end{align}

Now we find $[ \mathcal{\tilde{L}}_1, \mathcal{\tilde{L}}_2]=0$. This example shows that neither $\bra{s_n}[L_1,L_2]\ket{s_m}=0$ and $\bra{k_n}[L_1,L_2]\ket{s_m}=0$, nor $\bra{s_n}[L_1,L_2]\ket{s_m}=0$ and $\bra{k_n}[L_1,L_2]\ket{k_m}=0$, is a necessary condition for the attainability of the SLDCRB. In Appendix~\ref{apenMeasD}, we present a measurement based on the commuting SLD operators which saturates the SLDCRB.

\subsection{Example E}
\label{exE}
We now present our final example which has $r=3$, $d=5$.  Consider the model with
  \begin{align}
    S^\text{s}=\frac{1}{3}\begin{pmatrix}1&0&0\\0&1&0\\0&0&1\end{pmatrix}    
  \end{align}
and derivatives equal to
  \begin{align}
S_1
  = \frac{1}{3}\begin{pmatrix}\begin{matrix}1&0&0\\[0.1cm]0&0&0\\[0.1cm]0&0&-1\end{matrix}&\rvline&\begin{matrix}\frac{1}{2}&0\\[0.1cm]0&0\\[0.1cm]0&\frac{1}{2}\end{matrix}\\[0.6cm]
    \hline
    \begin{matrix}\frac{1}{2}&0&0\\[0.1cm]0&0&\frac{1}{2}\end{matrix}&\rvline&\bigzero\end{pmatrix}\;,\quad\text{and}\quad
S_2
  = \frac{1}{3}\begin{pmatrix}\begin{matrix}0&1&0\\[0.1cm]1&0&1\\[0.1cm]0&1&0\end{matrix}&\rvline&\begin{matrix}\frac{1}{2}&\frac{1}{2}\\[0.1cm]-\frac{1}{2}&\frac{1}{2}\\[0.1cm]\frac{1}{2}&-\frac{1}{2}\end{matrix}\\[0.6cm]
    \hline
    \begin{matrix}\frac{1}{2}&-\frac{1}{2}&\frac{1}{2}\\[0.1cm]\frac{1}{2}&\frac{1}{2}&-\frac{1}{2}\end{matrix}&\rvline&\bigzero\end{pmatrix}\,.
\end{align}
This problem gives rise to the following two SLD operators 
  \begin{align}
L_1
  = \begin{pmatrix}\begin{matrix}1&0&0\\0&0&0\\0&0&-1\end{matrix}&\rvline&\begin{matrix}1&0\\0&0\\0&1\end{matrix}\\
    \hline
    \begin{matrix}1&0&0\\0&0&1\end{matrix}&\rvline&\bigzero\end{pmatrix}\;,\quad\text{and}\quad
L_2
  = \begin{pmatrix}\begin{matrix}0&1&0\\1&0&1\\0&1&0\end{matrix}&\rvline&\begin{matrix}1&1\\-1&1\\1&-1\end{matrix}\\
    \hline
    \begin{matrix}1&-1&1\\1&1&-1\end{matrix}&\rvline&\bigzero\end{pmatrix}\,.
\end{align}
For this model the quantum Fisher information matrix is equal to
\begin{equation}
J_{\text{S}}=\begin{pmatrix}\frac{4}{3}&0\\
0&\frac{10}{3}\end{pmatrix}\;,
\end{equation}
and the associated SLDCRB is given by
\begin{equation}
\mathcal{C}_\text{S}=\frac{21}{20}\;.
\end{equation}
As before, we have $\text{tr}\{S[L_1,L_2]\}=0$ and $\text{trAbs}{S[L_1,L_2]}=0$, so therefore $\mathcal{C}_\text{S}=\mathcal{C}_\text{H}=\mathcal{C}_\text{NH}$. The two SLD operators do not commute on the support-kernel subspace:
 \begin{align} [L_1,L_2]   =
  \begin{pmatrix}\bigzero&\rvline&\begin{matrix}1&1\\-1&-1\\-1&1\end{matrix}\\
    \hline
    \begin{matrix}-1&1&1\\-1&1&-1\end{matrix}&\rvline&
 \bigzero\end{pmatrix}\,.
\end{align}

We can now consider the following extended SLD operators
\begin{align}
\mathcal{L}_1
  = \begin{pmatrix}\begin{matrix}1&0&0\\0&0&0\\0&0&-1\end{matrix}&\rvline&\begin{matrix}1&0\\0&0\\0&1\end{matrix}&\rvline&\bigzero\\
    \hline
    \begin{matrix}1&0&0\\0&0&1\end{matrix}&\rvline&L_1^\text{k}&\rvline&L_1^\text{ke}    \\
    \hline
\bigzero&\rvline&L_1^\text{ek}&\rvline&L_1^\text{e}
    \end{pmatrix}\;,\quad\text{and}\quad
\mathcal{L}_2
  = \begin{pmatrix}\begin{matrix}0&1&0\\1&0&1\\0&1&0\end{matrix}&\rvline&\begin{matrix}1&1\\-1&1\\1&-1\end{matrix}&\rvline&\bigzero\\
    \hline
    \begin{matrix}1&-1&1\\1&1&-1\end{matrix}&\rvline&L_2^\text{k}&\rvline&L_2^\text{ke}\\    \hline
\bigzero&\rvline&L_2^\text{ek}&\rvline&L_2^\text{e}\end{pmatrix}\,.
\end{align}
Only terms in the support-kernel and kernel subspaces influence the  $\bra{k_n}[\mathcal{L}_1,\mathcal{L}_2]\ket{s_m}$ terms. In order to make $\bra{k_n}[\mathcal{L}_1,\mathcal{L}_2]\ket{s_m}=0$, we are required to choose the free elements in the kernel space as
\begin{align}
\mathcal{L}_1
  = \begin{pmatrix}\begin{matrix}1&0&0\\0&0&0\\0&0&-1\end{matrix}&\rvline&\begin{matrix}1&0\\0&0\\0&1\end{matrix}&\rvline&\bigzero\\
    \hline
    \begin{matrix}1&0&0\\0&0&1\end{matrix}&\rvline&\begin{matrix}3&2\\2&1\end{matrix}&\rvline&L_1^\text{ke}    \\
    \hline
\bigzero&\rvline&L_1^\text{ek}&\rvline&L_1^\text{e}
    \end{pmatrix}\;,\quad\text{and}\quad
\mathcal{L}_2
  = \begin{pmatrix}\begin{matrix}0&1&0\\1&0&1\\0&1&0\end{matrix}&\rvline&\begin{matrix}1&1\\-1&1\\1&-1\end{matrix}&\rvline&\bigzero\\
    \hline
    \begin{matrix}1&-1&1\\1&1&-1\end{matrix}&\rvline&\begin{matrix}4&2\\2&0\end{matrix}&\rvline&L_2^\text{ke}\\[0.1cm]  \hline
\bigzero&\rvline&L_2^\text{ek}&\rvline&L_2^\text{e}\end{pmatrix}\,.
\end{align}

We next wish to make the two operators commute on the support--extended kernel subspace. Denoting the dimensions of the extended space as $e$, this condition requires
\begin{equation}
L_1^\text{ek}
  = \frac{1}{2}\begin{pmatrix}a_1&a_2&a_3\hdots&a_e\\   a_1&a_2&a_3\hdots&a_e\end{pmatrix}\;.
\end{equation}
and
\begin{equation}
L_2^\text{ek}
  = \begin{pmatrix}a_1&a_2&a_3\hdots&a_e\\    0&0&0\hdots&0\end{pmatrix}\;.
\end{equation}
However, when we impose this condition we find that it is impossible to choose $\bra{k_n}[\mathcal{L}_1,\mathcal{L}_2]\ket{k_m}=0$. Therefore, $\bra{s_n}[\mathcal{L}_1,\mathcal{L}_2]\ket{s_m}=0$ and $\bra{k_n}[\mathcal{L}_1,\mathcal{L}_2]\ket{s_m}=0$ is not a sufficient condition to saturate the SLDCRB.

\section{One-dimensional kernel space}
\label{sec:1D}
In examples A and B, we have $d=3$ and $r=2$, i.e. the kernel is one-dimensional. In one of these examples we can saturate the SLDCRB but not in the other. Hence, one might wonder whether it is possible to provide necessary and sufficient conditions for the attainability of the SLDCRB in this and similar settings. In this section we provide a simple answer to this question, presenting the necessary and sufficient conditions for the attainability of the SLDCRB whenever $d=r+1$, i.e. for one-dimensional kernel spaces. This is stated in theorem~\ref{theoremkernel1}. For this section, let the state $\mathcal{S}$ and SLD operators $L_j$ be
\begin{align}
\label{eqsldJun}
\mathcal{S}=\begin{pmatrix}
S^\text{s}&0\\
0&0
\end{pmatrix},
\quad 
L_j=\begin{pmatrix}
L^\text{s}_j&\ket{\ell_j}\\
\bra{\ell_j}&0
\end{pmatrix}\ (j=1,2)\;.
\end{align}
Here we have written $L^{\text{sk}}_j$ as $\ket{\ell_j}$ to emphasise that these quantities are vectors and to simplify the following. In the following we assume $L_j^\text{s}$ and $\ket{\ell_j}$ are non zero. As before we shall use $\mathcal{L}_j$ to denote the SLD operator when we choose non-zero values for the free elements. We can now state the main result of this section
\begin{theorem}
\label{theoremkernel1}
For a probe state $\mathcal{S}$ of rank $r$ in a $d$ dimensional Hilbert space $\cH$, such that $d=r+1$, the necessary and sufficient condition to saturate the two-parameter SLDCRB is 
that we can choose $\mathcal{L}_j$ such that $[\mathcal{L}_1,\mathcal{L}_2]$ vanishes on the support space and support-kernel space. 
Furthermore, this condition is equivalent to the following three conditions.
\begin{align}
&[L_1^\text{s},L_2^\text{s}]\mbox{ is rank 2}\\
&\Im\langle\ell_1|\ell_2\rangle=0\\
&L_1^\text{s}\ket{\ell_2}-L_2^\text{s}\ket{\ell_1}\in\mathrm{span}\{\ket{\ell_1},\ket{\ell_2}\}\;.
\end{align}
\end{theorem}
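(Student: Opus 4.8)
The plan is to reduce the statement to a concrete system of equations on the three free kernel entries $k_1 = L_1^{\text k}$, $k_2 = L_2^{\text k}$ (which here are scalars) together with the extension blocks, and then show the extension blocks are not needed. The starting point is the master criterion $[\mathcal{L}_1,\mathcal{L}_2]=0$ on an extended space from Ref.~\cite{suzuki2020quantum}, combined with Result A (equivalently the $\mathcal{C}_\text{S}=\mathcal{C}_\text{NH}$ condition of Ref.~\cite{conlon2022gap}), which already tells us that $\Bra{s_n}[\mathcal{L}_1,\mathcal{L}_2]\Ket{s_m}=0$ is forced; so the real content is: (i) vanishing on the support forces $[L_1^\text{s},L_2^\text{s}]$ to have a specific structure, and (ii) once the support and support-kernel blocks are made to vanish for $d=r+1$, no further obstruction can arise from the kernel or any extension, so the two-block condition is already sufficient.

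First I would write out the block commutator of the two operators in Eq.~\eqref{eqsldJun}. The support--support block of $[L_1,L_2]$ is $[L_1^\text{s},L_2^\text{s}] + \Ket{\ell_1}\Bra{\ell_2} - \Ket{\ell_2}\Bra{\ell_1}$; the support--kernel block is $L_1^\text{s}\Ket{\ell_2} - L_2^\text{s}\Ket{\ell_1} + (k_2-k_1)\,(\text{something})$ — more precisely, since the kernel is one–dimensional the $L_j^\text{k}$ are scalars $k_j$ and the support--kernel component is $L_1^\text{s}\Ket{\ell_2} - L_2^\text{s}\Ket{\ell_1} - (k_1\Ket{\ell_2} - k_2\Ket{\ell_1})$; and the kernel--kernel block vanishes automatically (it is a $1\times1$ antisymmetric-type object). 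So the requirement ``$[\mathcal{L}_1,\mathcal{L}_2]$ vanishes on support and support-kernel'' becomes the pair of equations
\begin{align}
[L_1^\text{s},L_2^\text{s}] &= \Ket{\ell_2}\Bra{\ell_1} - \Ket{\ell_1}\Bra{\ell_2}\,,\\
L_1^\text{s}\Ket{\ell_2} - L_2^\text{s}\Ket{\ell_1} &= k_1\Ket{\ell_2} - k_2\Ket{\ell_1}\,,
\end{align}
for some real scalars $k_1,k_2$. The first equation is independent of the free parameters; taking its rank gives exactly ``$[L_1^\text{s},L_2^\text{s}]$ is rank 2'', and taking the Hermiticity/anti-Hermiticity bookkeeping (the left side is anti-Hermitian, so is the right side, and $\Ket{\ell_2}\Bra{\ell_1}-\Ket{\ell_1}\Bra{\ell_2}$ being anti-Hermitian is automatic, but matching it to a commutator that is forced to have this dyadic form) yields $\Im\langle\ell_1|\ell_2\rangle=0$ after pairing against $\Ket{\ell_1}$ and $\Ket{\ell_2}$. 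The second equation is solvable for real $k_1,k_2$ precisely when $L_1^\text{s}\Ket{\ell_2}-L_2^\text{s}\Ket{\ell_1}\in\mathrm{span}\{\Ket{\ell_1},\Ket{\ell_2}\}$ and the resulting coefficients are real — and here I would use $\Im\langle\ell_1|\ell_2\rangle=0$ again, together with non-degeneracy of $\{\Ket{\ell_1},\Ket{\ell_2}\}$ (if they were parallel the argument collapses to a one-parameter subcase handled separately), to argue reality comes for free. This establishes the equivalence of the two-block condition with the three displayed conditions.

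It remains to argue that the two-block condition is \emph{equivalent} to full attainability, i.e. that once the support and support--kernel blocks vanish we can always complete to a genuinely commuting pair on some extended space. For $d=r+1$ the only ``missing'' block in $L_j$ is the single scalar $L_j^\text{k}$, which is already fixed by the second equation above; so $[\mathcal{L}_1,\mathcal{L}_2]=0$ holds on $\mathcal{H}$ itself with \emph{no} extension needed, and attainability follows directly from Table~\ref{T1}. Conversely, if the SLDCRB is attainable then by Ref.~\cite{suzuki2020quantum} there is an extension with $[\mathcal{L}_1,\mathcal{L}_2]=0$, which in particular forces vanishing on support and support--kernel (these blocks are unaffected by extension entries, as noted after Eq.~\eqref{eqSLDopextended}), giving the two-block condition. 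I expect the main obstacle to be the reality/solvability bookkeeping in the second equation: one must be careful that $\Im\langle\ell_1|\ell_2\rangle=0$ is not merely necessary but, together with the rank-2 condition, exactly enough to guarantee that real scalars $k_1,k_2$ exist solving $L_1^\text{s}\Ket{\ell_2}-L_2^\text{s}\Ket{\ell_1}=k_1\Ket{\ell_2}-k_2\Ket{\ell_1}$ — and to handle cleanly the degenerate case $\Ket{\ell_1}\parallel\Ket{\ell_2}$ (and the even more degenerate case where $[L_1^\text{s},L_2^\text{s}]$ drops rank, which must then be shown to contradict attainability via the rank-2 necessity).
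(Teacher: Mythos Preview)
Your proposal is correct and follows essentially the same block-decomposition approach as the paper: write the commutator in support/kernel blocks, read off the three conditions from the support--support and support--kernel equations, and observe that for a one-dimensional kernel the kernel--kernel block is automatic so no extension is needed (with necessity following because the support--kernel block of the extended commutator is independent of the extension entries). The paper's derivation of $\Im\langle\ell_1|\ell_2\rangle=0$ is slightly cleaner than your ``Hermiticity bookkeeping/pairing'' sketch---it simply notes that $[L_1^\text{s},L_2^\text{s}]$ is traceless while $\text{tr}\{\ket{\ell_1}\bra{\ell_2}-\ket{\ell_2}\bra{\ell_1}\}=2\mathrm{i}\,\Im\langle\ell_2|\ell_1\rangle$---and the paper also verifies your ``unaffected by extension'' claim explicitly by computing the full extended commutator and showing the support--extension block forces $\ket{\tilde{k}_j}=0$.
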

\begin{proof}

The commutation relation, for the SLD operators as written in Eq.~\eqref{eqsldJun}, is given by
\begin{equation}
[L_1,L_2]=\begin{pmatrix}
[L_1^\text{s},L_2^\text{s}]+\ket{\ell_1}\bra{\ell_2}-\ket{\ell_2}\bra{\ell_1}\ &L_1^\text{s}\ket{\ell_2}-L_2^\text{s}\ket{\ell_1}\\
\bra{\ell_1}L_2^\text{s}-\bra{\ell_2}L_1^\text{s}&\langle\ell_1|\ell_2\rangle-\langle\ell_2|\ell_1\rangle
\end{pmatrix}\;.
\end{equation}
The condition for commutativity on the support is 
\begin{equation}\label{cond1}
\bra{s_n}[L_1,L_2]\ket{s_m}=0\ \iff\ [L_1^\text{s},L_2^\text{s}]+\ket{\ell_1}\bra{\ell_2}-\ket{\ell_2}\bra{\ell_1}=0\;.
\end{equation}
Since the two $L_j^\text{s}$ are Hermitian, $\mathrm{i}[L^\text{s}_1,L_2^\text{s}]$ is also Hermitian. Furthermore, this commutator is traceless. 
On the other hand, $\mathrm{i}(\ket{\ell_1}\bra{\ell_2}-\ket{\ell_2}\bra{\ell_1})$ is a Hermitian matrix of rank two. 
However, this operator is not necessarily traceless, \newline\noindent\mbox{$\text{tr}\{\ket{\ell_1}\bra{\ell_2}-\ket{\ell_2}\bra{\ell_1}\}=2\mathrm{i}\Im\langle\ell_2|\ell_1\rangle$}. We thus have the following equivalence. 
\begin{equation}\label{cond1_2}
\bra{s_n}[L_1,L_2]\ket{s_m}=0\ \iff\ [L_1^\text{s},L_2^\text{s}]\mbox{ is rank 2 and }\Im\langle\ell_1|\ell_2\rangle=0\;.
\end{equation}
Note that if $[L_1^\text{s},L_2^\text{s}]$ is rank 2, we can always find $\ket{\ell_j}$ satisfying the above condition. Secondly, let us note that if $\bra{s_n}[L_1,L_2]\ket{s_m}=0$, then $\bra{k_n}[L_1,L_2]\ket{k_m}=0$ also holds. 

Next, let us consider an extension with only kernel components. 
\[
\mathcal{L}_j=\begin{pmatrix}
L^\text{s}_j&\ket{\ell_j}\\
\bra{\ell_j}&k_j
\end{pmatrix}\ (j=1,2)\;.
\]
Under the assumption of $\bra{s_n}[L_1,L_2]\ket{s_m}=0$, we immediately see that 
\begin{equation}
[\mathcal{L}_1,\mathcal{L}_2]=0\ \iff\ (L_1^\text{s}-k_1)\ket{\ell_2}-(L_2^\text{s}-k_2)\ket{\ell_1}=0\;.
\end{equation}
$k_1$ and $k_2$ can only be chosen to satisfy the above condition when
\begin{equation}
L_1^\text{s}\ket{\ell_2}-L_2^\text{s}\ket{\ell_1}\in\mathrm{span}\{\ket{\ell_1},\ket{\ell_2}\}\;.
\end{equation}
This cannot be true in general (see Example A). 
However, when dim$\cH$=3, as long as the condition \eqref{cond1} is satisfied, we can always find $k_j$ such that this condition is satisfied (see Example B). 
This is because $\ket{\ell_1},\ket{\ell_2}$ span the whole two dimensional 
space and $L_1^\text{s}\ket{\ell_2}-L_2^\text{s}\ket{\ell_1}$ can be uniquely expanded by them. Note that we have excluded the case where one of the $\ket{\ell_j}$ is the zero vector.\\

Finally, we consider a full extension of the SLDs as 
\begin{equation}
\tilde{\mathcal{L}}_j=\begin{pmatrix}
L^\text{s}_j&\ket{\ell_j}&0\\
\bra{\ell_j}&k_j&\bra{\tilde{k}_j}\\
0&\ket{\tilde{k}_j}&L_j^\text{e}
\end{pmatrix}\ (j=1,2)\;,
\end{equation}
where we use the notation that $L_j^\text{ke}=\bra{\tilde{k}_j}$ to highlight that this quantity is a vector. We use a tilde for this vector to distinguish it from the orthonormal basis vectors which span the kernel space, defined earlier. Under the assumption of $\bra{s_n}[L_1,L_2]\ket{s_m}=0$, 
the commutation relation is 
\begin{equation}
[\tilde{\mathcal{L}}_1,\tilde{\mathcal{L}}_2]=\begin{pmatrix}
0 &(L_1^\text{s}-k_1)\ket{\ell_2}-(L_2^\text{s}-k_2)\ket{\ell_1}&\ket{\ell_1}\bra{\tilde{k}_2}-\ket{\ell_2}\bra{\tilde{k}_1}\\
\bra{\ell_1}(L_2^\text{s}-k_2)-\bra{\ell_2}(L_1^\text{s}-k_1)&0&\bra{\tilde{k}_1}(L_2^\text{e}-k_2)-\bra{\tilde{k}_2}(L_1^\text{e}-k_1) \\
\ket{\tilde{k}_1}\bra{\ell_2}-\ket{\tilde{k}_2}\bra{\ell_1}&(L_1^\text{e}-k_1)\ket{\tilde{k}_2}-(L_2^\text{e}-k_2)\ket{\tilde{k}_1}
&\ket{\tilde{k}_1}\bra{\tilde{k}_2}-\ket{\tilde{k}_2}\bra{\tilde{k}_1}+[L_1^\text{e},L_2^\text{e}]
\end{pmatrix}\;.
\end{equation}
Provided the $\ket{\ell_j}$ are linearly independent, from the support-extension component, we must have $\ket{\tilde{k}_j}=0$. 
Under this condition, we have 
\begin{equation}
[\tilde{\mathcal{L}}_1,\tilde{\mathcal{L}}_2]=0\ \iff\ (L_1^\text{s}-k_1)\ket{\ell_2}-(L_2^\text{s}-k_2)\ket{\ell_1}=0\mbox{ and }[L_1^\text{e},L_2^\text{e}]=0\;.
\end{equation}
This concludes that any extension with extended space does not help to have commutative SLD operators. 
\end{proof}
Finally note that for rank$\mathcal{S}=2$, the last condition in theorem~\ref{theoremkernel1} is automatically satisfied.

\section{Discussion and conclusion}
\label{sec:conc}
This work highlights the role and importance of the kernel and extended spaces in the attainability of the SLDCRB. Ignoring the extended space and requiring the SLD operators to be commuting essentially limits the space of possible measurements to be projective measurements rather than POVMs. Note that there are other, less restrictive conditions that allow for more general POVMS without considering the extended space~\cite{yang2019optimal}. Hence, it is not necessary to consider the extended space, however as our examples demonstrate the extended space can be a very useful tool.



From a practical viewpoint, implementing collective measurements on even a finite number of copies of the probe state is difficult~\cite{roccia2017entangling,parniak2018beating,hou2018deterministic,wu2019experimentally,yuan2020direct,conlon2023approaching,conlon2023discriminating,zhou2023experimental,tian2024minimum}. Therefore, we can expect that the conditions we have presented will be of relevance to practical quantum metrology experiments. We have presented explicit examples which demonstrate the necessity and sufficiency of certain conditions on the SLD operators for the attainability of the SLDCRB. However, many important open questions remain. The main open question is to provide general conditions on when the SLD operators can be chosen to commute, i.e. what distinguishes examples A and E from examples B, C and D? We have provided the necessary and sufficient conditions for two parameter estimation for the case when the kernel space is one-dimensional. We have also presented a simple analytic example showing that the NHCRB is not always a tight bound. This complements the recent numerical results of Ref.~\cite{hayashi2023tight}. 


Finally, it might be tempting to argue that our examples are rather abstract and do not correspond to a physical model. To that end, we provide a physical model for our example C in Appendix~\ref{apen:physicalstate}. 
  
  \section*{Acknowledgements}
This research was funded by the Australian Research Council Centre of Excellence CE170100012, Laureate Fellowship FL150100019 and the Australian Government Research Training Program Scholarship. 

\noindent This research is supported by A*STAR C230917010, Emerging Technology and A*STAR C230917004, Quantum Sensing. 

\noindent JS is partially supported by JSPS KAKENHI Grant Numbers JP21K04919, JP21K11749.

\section*{Code Availability}
The code that support the findings of this study is available from the corresponding author upon reasonable request.
\section*{Competing Interests}
The authors declare no competing interests

  \appendix
\section{Physical model for example C}
\label{apen:physicalstate}
We now present a physical model for example C, using the quantum exponential family. Given two commuting SLD operators, the following provides a general description of the quantum state
\begin{equation}
\label{eqExponentialFam}
\mathcal{S}(\theta_1,\theta_2)=\text{e}^{(\mathcal{L}_1\theta_1+\mathcal{L}_2\theta_2)/2-\psi(\theta)/2}\mathcal{S}_0\text{e}^{(\mathcal{L}_1\theta_1+\mathcal{L}_2\theta_2)/2-\psi(\theta)/2}
\end{equation} 
where we have made the dependence on $\theta_1,\theta_2$ explicit, $\mathcal{S}_0$ is defined by Eq.~\eqref{rhoExC} for our particular problem and $\psi(\theta)=\text{log}(\text{tr}\{\mathcal{S}_0\text{exp}(\mathcal{L}_1\theta_1+\mathcal{L}_2\theta_2)\})$ is a normalisation factor. Note that the extension to multiple parameters follows in a simple manner~\cite{suzuki2020quantum}. Evaluating Eq.~\eqref{eqExponentialFam} and its derivatives at $\theta_1=\theta_2=0$ gives example C as claimed.


\section{Measurement saturating the SLDCRB for example C}
\label{apenMeasC}
In this appendix we present an explicit POVM which saturates the SLDCRB for example C. We use the SLD operators in Eq.~\eqref{SLDopCommuteExC} for this. We shall use the following normalised eigenvectors to construct our POVM
\begin{align}
\ket{\psi_1}=\frac{1}{\sqrt{2}}\begin{pmatrix}0\\-1\\0\\1\end{pmatrix}\;,
\ket{\psi_2}=\frac{1}{\sqrt{2}}\begin{pmatrix}1\\0\\1\\0\end{pmatrix}\;,
\ket{\psi_1}=\frac{1}{2}\begin{pmatrix}-\mathrm{i}\\1\\\mathrm{i}\\1\end{pmatrix}\;,
\ket{\psi_1}=\frac{1}{2}\begin{pmatrix}\mathrm{i}\\1\\-\mathrm{i}\\1\end{pmatrix}\;.
\end{align}
We then use the POVM $\{\Pi_i=\ket{\psi_i}\bra{\psi_i}\}$. We can then calculate the classical Fisher information using this POVM to find that it saturates the SLDCRB, Eq.~\eqref{eqSLDCRBExC} as claimed.

%
%

\section{Measurement saturating the SLDCRB for example D}
\label{apenMeasD}
From the commuting SLD operators in the extended space, Eq.~\eqref{EqExDcommute}, it is possible to obtain a simple projective measurement saturating the SLDCRB. Note that as the matrices $\mathcal{\tilde{L}}_1$ and $\mathcal{\tilde{L}}_2$ are degenerate, their eigenvectors are not unique. Therefore the standard measurement along the eigenvectors of one of the SLD operators will not work in this case. (See e.g. Appendix C.2. of Ref.~\cite{conlon2022gap}.) We have to measure along a set of eigenvectors that is common to both. In this case, we can find the eigenvectors of $\mathcal{\tilde{L}}_1+\mathcal{\tilde{L}}_2$ and use this to construct our POVM. The normalised eigenvectors are
\begin{align}
\ket{\psi_1}=&\frac{1}{2\sqrt{3}}\begin{pmatrix}0\\2\\0\\1\\-1\\-2\\1\\1\end{pmatrix}\;,
\ket{\psi_2}=\frac{1}{\sqrt{30}}\begin{pmatrix}2\\3\\-2\\-1\\1\\3\\1\\1\end{pmatrix}\;,
\ket{\psi_3}=\frac{1}{2\sqrt{5}}\begin{pmatrix}2\\-2\\-2\\-1\\1\\-2\\1\\1\end{pmatrix}\;,
\ket{\psi_4}=\frac{1}{\sqrt{2}}\begin{pmatrix}1\\0\\1\\0\\0\\0\\0\\0\end{pmatrix}\;, \\
\ket{\psi_5}=\frac{1}{2\sqrt{2+\sqrt{2}}}&\begin{pmatrix}0\\0\\0\\1+\sqrt{2}\\1+\sqrt{2}\\0\\-1\\1\end{pmatrix}\;,
\ket{\psi_6}=\frac{1}{2\sqrt{2+\sqrt{2}}}\begin{pmatrix}0\\0\\0\\-1\\-1\\0\\-1-\sqrt{2}\\1+\sqrt{2}\end{pmatrix}\;,
\ket{\psi_7}=\frac{1}{\sqrt{6}}\begin{pmatrix}0\\-1\\0\\1\\-1\\1\\1\\1\end{pmatrix}\;, 
\ket{\psi_8}=\frac{1}{\sqrt{6}}\begin{pmatrix}-1\\0\\1\\-1\\1\\0\\1\\1\end{pmatrix}\;.
\end{align}
We then use the POVM $\{\Pi_i=\ket{\psi_i}\bra{\psi_i}\}$. Note that $\Pi_5$ and $\Pi_6$ are null POVMs, in that $\text{tr}\{\mathcal{S}\Pi_5\}=\text{tr}\{\mathcal{S}\Pi_6\}=0$. As such they can be combined into one POVM element $\Pi_0$ with no loss of information. It is easily calculated that for this example the trace of the inverse of the classical Fisher information matrix gives $22/27$, in agreement with Eq.~\eqref{eqSLDexD}.

\section{A second example supporting result A}
\label{egAapen}
In examples where theorem~\ref{theoremkernel1} cannot be applied, it is not so easy to check when the extended SLD operators can be chosen to commute. In this appendix, we provide a second example, that supports result A for a system with $d=5$, $r=3$. Consider the model with
  \begin{align}
    S^\text{s}=\frac{1}{3}\begin{pmatrix}1&0&0\\0&1&0\\0&0&1\end{pmatrix} \;,
  \end{align}
and the following derivatives 
  \begin{align}
  S_1
  = \frac{1}{3}\begin{pmatrix}\begin{matrix}1&0&0\\0&0&0\\0&0&-1\end{matrix}&\rvline&\begin{matrix}\frac{1}{2}&0\\0&0\\0&\frac{1}{2}\end{matrix}\\\hline
    \begin{matrix}\frac{1}{2}&0&0\\0&0&\frac{1}{2}\end{matrix}&\rvline&\bigzero
   \end{pmatrix}\;,\quad\text{and}\quad
   S_2
  =\frac{1}{3} \begin{pmatrix}\begin{matrix}0&1&0\\1&0&1\\0&1&0\end{matrix}&\rvline&\begin{matrix}0&\frac{1}{2}\\-\frac{1}{2}&\frac{1}{2}\\\frac{1}{2}&0\end{matrix}\\\hline
    \begin{matrix}0&-\frac{1}{2}&\frac{1}{2}\\\frac{1}{2}&\frac{1}{2}&0\end{matrix}&\rvline&\bigzero
    \end{pmatrix}\;.
\end{align}
This model gives rise to the following SLD operators
  \begin{align}
 L_1
  = \begin{pmatrix}\begin{matrix}1&0&0\\0&0&0\\0&0&-1\end{matrix}&\rvline&\begin{matrix}1&0\\0&0\\0&1\end{matrix}\\\hline
    \begin{matrix}1&0&0\\0&0&1\end{matrix}&\rvline&\bigzero\end{pmatrix}\;,\quad\text{and}\quad
  L_2
  = \begin{pmatrix}\begin{matrix}0&1&0\\1&0&1\\0&1&0\end{matrix}&\rvline&\begin{matrix}0&1\\-1&1\\1&0\end{matrix}\\\hline
    \begin{matrix}0&-1&1\\1&1&0\end{matrix}&\rvline&\bigzero\\
    \end{pmatrix}.
\end{align}
We see that
\begin{align}
[L_1,L_2]=\begin{pmatrix}\bigzero&\rvline&\begin{matrix}0&1\\-1&-1\\0&-1\end{matrix}\\\hline
    \begin{matrix}0&1&1\\-1&1&0\end{matrix}&\rvline&\bigzero\end{pmatrix}
\end{align}

We now wish to know if we can choose the SLD operators in the extended space such that $[\mathcal{L}_1,\mathcal{L}_2]=0$. The SLD operators in the extended space can be written
  \begin{align}
  \mathcal{L}_1
  = \begin{pmatrix}\begin{matrix}1&0&0\\0&0&0\\0&0&-1\end{matrix}&\rvline&\begin{matrix}1&0\\0&0\\0&1\end{matrix}&\rvline&\bigzero\\\hline
    \begin{matrix}1&0&0\\0&0&1\end{matrix}&\rvline&L_1^\text{k}&\rvline&L_1^\text{ke}\\\hline
    \bigzero&\rvline&L_1^\text{ek}&\rvline&L_1^\text{e}\end{pmatrix}\;,\quad\text{and}\quad
  \mathcal{L}_2
  = \begin{pmatrix}\begin{matrix}0&1&0\\1&0&1\\0&1&0\end{matrix}&\rvline&\begin{matrix}0&1\\-1&1\\1&0\end{matrix}&\rvline&\bigzero\\\hline
    \begin{matrix}0&-1&1\\1&1&0\end{matrix}&\rvline&L_2^\text{k}&\rvline&L_2^\text{ke}\\\hline
    \bigzero&\rvline&L_2^\text{ek}&\rvline&L_2^\text{e}\end{pmatrix}.
\end{align}
By construction,
$\bra{s_n}[\mathcal{L}_1,\mathcal{L}_2]\ket{s_m}=0$. We are going to
show that in this example, it is not possible to find an extended
space to make  $[\mathcal{L}_1,\mathcal{L}_2]=0$.

 A necessary condition for $[\mathcal{L}_1,\mathcal{L}_2]=0$ in the whole
(support--kernel--extended) space, we need
$\bra{s_n}[\mathcal{L}_1,\mathcal{L}_2]\ket{k_m}=0$ in the
support--kernel off-diagonal subspace. In this subspace,
\begin{align}
  \bra{s_n}[\mathcal{L}_1,\mathcal{L}_2]\ket{k_m}&=0\\
  \implies L_1^\text{s}  L_2^\text{sk} - L_2^\text{s}
  L_1^\text{sk}+ L_1^\text{sk}  L_2^\text{k} - L_2^\text{sk}
  L_1^\text{k}&=0\,,
\end{align}
which does not depend on entries in the extended space.

 In our example, this leads to the following condition
  \begin{align}
  \bra{s_n}[\mathcal{L}_1,\mathcal{L}_2]\ket{k_m} =   \begin{pmatrix}0&1\\-1&-1\\-1&0\end{pmatrix}+\begin{pmatrix}1&0\\0&0\\0&1\end{pmatrix}L_2^\text{k}- \begin{pmatrix}0&1\\-1&1\\1&0\end{pmatrix}L_1^\text{k}=0\,.
  \end{align}
Writing
$L_j^\text{k}=\begin{pmatrix}a_j&b_j+\mathrm{i}c_j\\b_j-\mathrm{i}c_j&d_j\end{pmatrix}$
with all real variables, and working this out explicitly, we get the six conditions
\begin{align}
  \bra{s_n}[\mathcal{L}_1,\mathcal{L}_2]\ket{k_m}=  \begin{pmatrix}  a_2-b_1+\mathrm{i} c_1& 1+b_2+\mathrm{i} c_2 -d_1\\
    -1+a_1-b_1+\mathrm{i}c_1 & -1 + b_1 +\mathrm{i} c_1-d_1\\
    -1+b_2-\mathrm{i} c_2-a_1 &d_2-b_1-\mathrm{i} c_1
    \end{pmatrix}=0\,.
\end{align}
 $\bra{s_n}[\mathcal{L}_1,\mathcal{L}_2]\ket{k_m}=0$ requires that the real and imaginary parts are separately zero. In order for the imaginary parts to be equal to zero we require $c_1=c_2=0$. We now show that there's no solution to the set of equations:
  \begin{align}
    \bra{s_1}[\mathcal{L}_1,\mathcal{L}_2]\ket{k_2}=1+b_2-d_1&=0\\
    \bra{s_2}[\mathcal{L}_1,\mathcal{L}_2]\ket{k_2}=1-b_1+d_1&=0\\
    \bra{s_2}[\mathcal{L}_1,\mathcal{L}_2]\ket{k_1}=1-a_1+b_1&=0\\
    \bra{s_3}[\mathcal{L}_1,\mathcal{L}_2]\ket{k_1}=1+a_1-b_2&=0
  \end{align}
The first two equations imply that $b_1-b_2=2$ but the last two
equations require $b_1-b_2=-2$ which is a contradiction. Therefore, there is no way to choose $\bra{s_n}[\mathcal{L}_1,\mathcal{L}_2]\ket{k_m}=0$.

\bibliographystyle{unsrt}
\bibliography{sld_extension_bib}
\end{document}